\newtheorem{proposition}{Proposition}
\newtheorem{theorem}{Theorem}
\newtheorem{condition}{Condition}
\newtheorem{definition}{Definition}
\pgfplotsset{compat=newest}
\pgfplotsset{plot coordinates/math parser=false}
\def\bs{\ensuremath\boldsymbol}
\title{Infinite families of graphs and stable completion of arbitrary matrices, Part I}
\author{Augustin Cosse\\
\textcolor{blue}{augustin.cosse@univ-littoral.fr}}
\affil{Universit\'e du Littoral C\^ote d'Opale}
\begin{document}
\maketitle

\begin{abstract}
We study deterministic constructions of graphs for which the unique completion of low rank matrices is generically possible regardless of the values of the entries.  We relate the completability to the presence of some patterns (particular unions of self-avoiding walks) in the subgraph of the lattice graph generated from the support of the bi-adjacency matrix.  The construction makes it possible to design infinite families of graphs on which exact and stable completion is possible for every fixed rank matrix through the sum-of-squares hierarchy.

\end{abstract}

\section{Introduction}

Given a matrix $\bs X_0\in \mathcal{M}(m\times n, r) \subset \mathbb{R}^{m\times n}$,  where $\mathcal{M}(m\times n; r)$ denotes the manifold of rank-$r$ matrices,  the completion problem consists in recovering $\bs X_0$ given a subset $\Omega \subset E = [m]\times [n]$ of its entries.  The known and missing entries of the matrix are usually represented through the bipartite graph $G = (V, W, E)$ with $V = [m],  W = [n]$ corresponding to the row and column indices of $\bs X_0$.  In its general form, the low rank matrix completion problem finds numerous applications from collaborative filtering~\cite{goldberg1992using} to vision and control~\cite{chen2004recovering}, including genomic data integration~\cite{cai2016structured}. 

In the particular case of Gram (i.e. positive semidefinite matrices) of the form $X = PP^\intercal$,  the columns of $P$ can be regarded as points in $\mathbb{R}^r$ and the matrix $X$ can be regarded as specifying a subset of pairwise distances between points in $\mathbb{R}^r$.  The problem then consists in recovering the positions of all the points.  Applications of this \emph{geometric} formulation sometimes known as \emph{graph realization} can be found in sensor localization~\cite{singer2008remark, biswas2006semidefinite},  structural biology (i.e. molecular conformation, NMR spectroscopy and the molecule problem)~\cite{hadani2011representation}.

There exist a large body of work on the number of measurements and/or structure of the graph garanteeing the unique (and stable) recovery of the matrix.  Those can be roughly separated into two main categories which are discussed in more details in sections~\ref{pseudoRandomSchemes} and~\ref{discussionAlgebraicMC}. 
\begin{itemize}
\item Random or pseudo-random (i.e. expander based) sampling and/or incoherence assumptions.  The combination of random sampling and incoherence appears in classical compressed sensing references such as~\cite{candes2010power, candes2008exact, recht2011simpler} as well as~\cite{keshavan2009matrix, keshavan2010matrix} or even~\cite{bhojanapalli2014universal, burnwal2020deterministic, burnwal2019construction, butnwal2019some, burnwal2020exact} (Ramanujan + incoherence) 
\item Deterministic/algebraic constructions and matroid/rigidity theory.  The classical references on deterministic sampling patterns include~\cite{kiraly2015algebraic}, ~\cite{singer2010uniqueness} and~\cite{pimentel2016characterization} as well as the litterature on rigidity theory and (see e.g.~\cite{cucuringu2012graph} and the additional references discussed in section~\ref{discussionAlgebraicMC} below)
\end{itemize}

\section{Main contributions}

What motivates this work is the following Theorem from~\cite{kiraly2015algebraic}
\begin{theorem}\label{theoremOnlystructure01}
Let $E\subset [m]\times [n]$ be a subset of positions, $(k, \ell)\in [m]\times [n]\setminus E$ be arbitrary and let $\bs X_0\in \mathbb{R}^{m\times n}$ be a generic $(m\times n)$ matrix of rank $r$. Whether the entry $(X_0)_{k, \ell}$ at a position $(k, \ell)$ is uniquely completable from the $(\bs X_0)_{ij}$, $(i,j)\in E$, depends only on the position $(k, \ell)$, the true rank $r$, and the observed positions $E$ (and not on the value of the entries themselves)
\end{theorem}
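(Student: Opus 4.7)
The plan is to reformulate unique completability as an intrinsic algebraic property of the determinantal variety, so that its dependence on the actual numerical values drops out. Let $V_r \subset \mathbb{R}^{m \times n}$ denote the determinantal variety cut out by the vanishing of all $(r+1) \times (r+1)$ minors; its smooth locus contains $\mathcal{M}(m \times n, r)$ as a Zariski-open dense subset, and $V_r$ is a well-known irreducible algebraic variety of dimension $r(m+n-r)$. Let $\pi_E : V_r \to \mathbb{R}^{|E|}$ be the coordinate projection $X \mapsto (X_{ij})_{(i,j) \in E}$, which is a morphism of varieties. The generic fiber of $\pi_E$ has a well-defined dimension, and the Zariski closure of the image is an irreducible subvariety $\overline{\pi_E(V_r)} \subset \mathbb{R}^{|E|}$.

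The key reformulation I would use is field-theoretic. Consider the function field $\mathbb{R}(V_r)$, and let $K_E \subset \mathbb{R}(V_r)$ be the subfield generated by the coordinate functions $\{X_{ij} : (i,j) \in E\}$. The entry at $(k, \ell)$ is uniquely completable from $E$ at a generic point $X_0$ exactly when the coordinate function $X_{k\ell}$ lies in $K_E$; if it does, then $X_{k\ell}$ is expressible as a rational function of the observed entries, hence the value $(X_0)_{k\ell}$ is determined on a Zariski-open dense subset of $V_r$ (the complement of the pole locus of this rational function together with the exceptional fibers of $\pi_E$). Conversely, if $X_{k\ell} \notin K_E$, then $X_{k\ell}$ is either transcendental over $K_E$ (forcing a positive-dimensional family of completions on each generic fiber) or algebraic of degree $\geq 2$, in which case the generic fiber exhibits multiple consistent values. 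Since $K_E$, $\mathbb{R}(V_r)$, and the element $X_{k\ell}$ depend only on $r$, $m$, $n$, $E$, and $(k, \ell)$, the inclusion $X_{k\ell} \in K_E$ is a property that does not refer to $(X_0)$ at all.

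To transfer this field-theoretic statement back to a statement about generic matrices, I would invoke standard generic-flatness / fiber-dimension results: there exists a Zariski-open dense subset $U \subset V_r$ (whose complement is a proper subvariety independent of $X_0$) on which the rational function representing $X_{k\ell}$ is defined and the fiber of $\pi_E$ behaves uniformly. A generic $X_0 \in \mathcal{M}(m \times n, r)$, taken to mean a point in the intersection of $U$ with the complement of any prescribed proper subvariety, therefore either is uniquely completable at $(k, \ell)$ (when $X_{k\ell} \in K_E$) or is not (when $X_{k\ell} \notin K_E$), with the outcome determined by the combinatorial data alone.

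\textbf{Main obstacle.} The delicate point is the last passage from ``a property of the function field'' to ``a property that actually holds for every generic matrix.'' Over $\mathbb{R}$ one has to verify that the relevant Zariski-open subsets, while non-empty by irreducibility, are also non-empty in the real-analytic sense, so that ``generic real matrix of rank $r$'' is a meaningful notion. This is handled by the smoothness and irreducibility of $\mathcal{M}(m \times n, r)$ as a real manifold and a standard argument that a non-trivial polynomial cannot vanish on a Euclidean-open subset of a smooth irreducible real variety; however, one must also take care that when $X_{k\ell} \notin K_E$, the failure of unique completability is not accidentally cured by working over $\mathbb{R}$ rather than $\mathbb{C}$ (e.g., complex conjugate fiber points collapsing to a single real value). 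Ruling this out requires checking that the extra roots produced by algebraicity of $X_{k\ell}$ over $K_E$ remain distinct and real for a dense set of $X_0$, which is the only step that needs some work.
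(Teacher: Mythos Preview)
The paper does not contain a proof of this statement at all: Theorem~\ref{theoremOnlystructure01} is quoted verbatim from \cite{kiraly2015algebraic} as motivation for the present work, and no argument for it is given here. So there is nothing in this paper to compare your proposal against.

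That said, your field-theoretic reformulation is essentially the approach taken in \cite{kiraly2015algebraic}: one identifies unique generic completability of the entry $(k,\ell)$ with the coordinate function $X_{k\ell}$ lying in the subfield $K_E\subset \mathbb{R}(V_r)$ generated by the observed coordinates, and then invokes irreducibility of the determinantal variety together with generic fiber arguments to pass from the function-field statement to a statement about generic points. Your identification of the real-versus-complex issue as the main obstacle is accurate and honest; it is precisely the step where a careless argument could go wrong. One remark that partially defuses it: since $X_0$ itself always lies in the fiber $\pi_E^{-1}(\pi_E(X_0))$, the minimal polynomial of $X_{k\ell}$ over $K_E$, specialized at a generic $X_0$, always has at least one real root. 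In the degree-two case this forces the discriminant to be nonnegative on real points, hence generically positive, hence two distinct real completions; so in that case unique real completability genuinely fails. For higher algebraic degree one has to work harder (or, as in \cite{kiraly2015algebraic}, set up the genericity notion carefully from the start so that the complex and real pictures agree). If you want to give a self-contained proof rather than a citation, that is the step to flesh out.
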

Given this theorem,  and in particular the result that was derived in~\cite{cosse2021stable} it seems natural to wonder whether the incoherence assumptions in~\cite{bhojanapalli2014universal} or~\cite{candes2010power} or the use of the $\gamma_2$-norm in the estimate from~\cite{heiman2014deterministic} are really needed.  In view of~\cite{cosse2021stable} it seems as though a convex recovery result free of any assumption on the distribution of the singular vectors could be achieved by only a slightly more complex semidefinite program.  In this paper we rely on the support of the bi-adjacency matrix to derive a completion result based on the existence of a sequence of subgraphs $\left(K_0\right)_{2,2}^- \rightarrow \left(K_1\right)_{2,2}^{2-} \rightarrow \ldots \left(K_\ell\right)_{2,2}^{2-}$ where $(K_{\ell}^{k-})$ denote complete bipartite graphs with $k$ edges missing.  Building on this idea, one should reasonably expect to be able to derive a similar result for arbitrary rank matrices. In particular, we can expect the existence of unique tractable rank-2 completion on the bi-adjacency matrices of graphs exhibiting a chains of the form 
\begin{align}
\left\{(K_a)_{3,3}^{-}\right\}_a\rightarrow \left\{(K_a)_{3,3}^{2-}\right\}_a \rightarrow \ldots \rightarrow \left\{(K_{3,3})_a^{9-}\right\}_a\label{bipartitePath}\end{align}
such that every set of $K_{3,3}^{\ell-}$ graphs can be completed from the knowledge of the previously completed $K_{3,3}^{m-}$ with $m<\ell$. In this note we give a simple sufficient condition on the adjacency matrix for such paths to exist within the graph. The condition is related to the existence of self-avoiding walks/cycles inside the subgraph of the lattice graph generated by the support of the bi-adjacency matrix.

\subsection{Main results}

\subsubsection{Mathematical Preliminaries}

In order to state the main results, we will need a few notions related to self-avoiding walks. We will also very briefly recall the key ideas behind the sum-of-squares hierarchy (see~\cite{laurent2008sums} for details). We will then start by addressing the rank 2 case (section~\ref{rank2caseSection}) and finally we will discuss an extension to the general low rank completion setting in section~\ref{generalCase}. 

\begin{definition}[see e.g.~\cite{polypoly2009, tullekenpolyominoes2}, also~\cite{van2015statistical}]
Let $\mathbb{Z}^d$ denote the hypercubic lattice.  A $n$-step self-avoiding walk $\partial\Omega\subset \mathbb{Z}^d$ is a sequence $\omega$ of distinct points $\omega(0), \omega(1), \ldots, \omega(n)$ in $\mathbb{Z}^d$ such that each point is a nearest neighbor of its predecessor, i.e. $\omega(k+1) - \omega(k)\in \left\{x\in \mathbb{Z}^d\;:\; \|x\|_1 = 1\right\}$.  If the end point of a self-avoiding walk, $\omega(n)$ is adjacent to the origin $\omega(0)$ (i.e. $\omega(n)\sim\omega(0)$), an additional step joining the end-point to the origin will produce a \emph{self-avoiding circuit}.
\end{definition}

Given a path $\omega$, we will use the notation $\omega[v, w]$ to denote the subwalk from time $v$ to $w$. 

Consider the bi-adjacency matrix $A$ associated to the graph $G([m],[n], E)$, i.e. $A\in \left\{0,1\right\}^{m\times n}$.  We define the sub-region of the square lattice
\begin{align}
\Lambda_{m,n} = \left\{(i,j)\in \mathbb{Z}^2\;|\; 1\leq i\leq m, \;1\leq j\leq n\right\}
\end{align}
and we define the support 
\begin{align}
S = \left\{(i,j)\in \Lambda_{m,n}\; |\; A_{ij}=1\right\}
\end{align}
We consider the \emph{subgraph} $\mathcal{G}$  \emph{of the lattice graph} generated from $S$.

\subsubsection{Sums-of-squares}

Given the system of polynomial equations 
\begin{align}
S = \left\{X_{ij} - (X_0)_{ij}\right\}_{ij\in E}\cup \left\{\text{det}(\tilde{X})_{I,J}^{E}(X)\right\}_{|I|, |J| = r}\label{polySystem01}
\end{align} 
where $\text{det}(\tilde{X})_{I,J}( X)$ denotes the set of polynomial equations (in the entries of $X_{E^c}$) generated from the minors associated to the matrix $\tilde{X} = (X_0)_E + (X - X_E)$ where we have replaced the entries in $E$ by their value in $X_0$, we use $\mathcal{I}$ to denote the ideal associated to the system,  i.e.
\begin{align}
\mathcal{I} = \left\{\sum_{j} h_j(X) p_j(X)\;|\;p_j(X)\in \mathbb{R}[X], \; h_j(X)\in S\right\}
\end{align}
It is known (see for example~\cite{cosse2021stable}) that as soon as one can show that the system~\eqref{polySystem01} has a unique solution, a stability estimate can be derived by means of a semidefinite program of size $O((mn)^t)$ provided that one can show 
\begin{align}
X_{ij} - (X_0)_{ij} \in \mathcal{I}_t
\end{align}
where $\mathcal{I}_t$ is the truncation of $\mathcal{I}$ to degree $t$, that is to say to polynomials $p_j(X)$ with $\text{\upshape deg}(p_j(X))\leq t - \max(\text{\upshape deg}(h_j(X)))$.  The semidefinite program corresponds to the minimization of the Trace norm over the set of (truncated) moments matrices (We will not go into the details of those ideas here and refer the interested reader to the monograph~\cite{laurent2008sums}). In this work we focus on showing that, under particular patterns in the subgraph of the lattice graph associated to $E = \text{\upshape supp}(X_0)$, it is possible to show that the system~\eqref{polySystem01} admits a single solution and by relying on the ideas developed in~\cite{cosse2021stable} that all the monomials $X_{ij} - (X_0)_{ij}$ can be expressed from the ideal $\mathcal{I}_t$ for an appropriate choice of $t$.

\subsubsection{\label{rank2caseSection}Rank 2 case}

\begin{definition}[rectangle]
Given a path $\omega = (\omega(0), \ldots, \omega(n))\subset \mathbb{Z}^2$ on the two-dimensional lattice, we say that a subpath of $\omega$ \emph{forms a rectangle} if there exist indices $i_1<i_2<i_3<i_4<i_5$ and integers $a<b, c<d$ such that
\begin{align*}
\omega(i_1) = (a, c)\\
\omega(i_2) = (b, c)\\
\omega(i_3) = (b,d)\\
\omega(i_4) = (a, d)\\
\omega(i_5) = (a, c)
\end{align*}
and each segment $\omega[i_j, i_{j+1}]$ is a straight horizontal and vertical lattice path. 
\end{definition}

\begin{definition}[\label{oneRemovableVertex}1-removable vertex]
Given a self avoiding circuit $C_0$, for any concave/reflex vertex $\omega(v)$, we consider the intersection between each of the two edges making up the vertex and the part of the cycle facing those edges $\omega\pm (L_x, 0)$ and $\omega(v)\pm(0, L_y)$ we denote those intersections as $\omega(v^x)$ and $\omega(v^y)$.  We call the vertex $1$-removable if at least of the following conditions is satisfied
\begin{itemize}
\item $\omega[v, v^x]\cup \left\{\omega(v), \omega(v)+(1,0), \ldots \omega(v)+(L_x,0)\right\}$ forms a rectangle\\ and $\left\{\omega(v)+(0,K), \ldots, \omega(v)+(L_x, K)\right\}\subset \Omega$ for some $K\in \mathbb{Z}$
\item $\omega[v^x, v]\cup \left\{\omega(v), \omega(v)+(1,0), \ldots \omega(v)+(L_x,0)\right\}$ forms a rectangle\\ and $\left\{\omega(v)+(0,K), \ldots, \omega(v)+(L_x, K)\right\}\subset \Omega$ for some $K\in \mathbb{Z}$
\item $\omega[v, v^y]\cup \left\{\omega(v), \omega(v)+(0,1), \ldots \omega(v)+(0,L_y)\right\}$ forms a rectangle\\ and $\left\{\omega(v)+(K,0), \ldots, \omega(v)+(K,L_y)\right\}\subset \Omega$ for some $K\in \mathbb{Z}$
\item $\omega[v^y, v]\cup \left\{\omega(v), \omega(v)+(0,1), \ldots \omega(v)+(0,L_y)\right\}$ forms a rectangle\\
and $\left\{\omega(v)+(K,0), \ldots, \omega(v)+(K,L_y)\right\}\subset \Omega$ for some $K\in \mathbb{Z}$
\end{itemize}
\end{definition}

\begin{definition}[L-removable vertex]
Let us denote as $C_1$ the cycle obained by ``removing" all $1$-removable reflex vertices from $C_0$ and closing the correspondig subcycles. Then we call $2$-removable with respect to $C_0$ the $1$ removable reflex vertices of $C_1$. Similarly, if we denote as $C_\ell$ the cycle obtained after $\ell$ steps corresponding to successively removing each of the removable vertices, then a reflex vertex is called $L$-removable with respect to $C_0$ if it $1$-removable with respect to $C_{L-1}$.  
\end{definition}

\begin{proposition}[\label{propositionCycleSelfAvoiding01}Rank 2 case]
Let $G = ([m], [n],E)$ denote any bipartite graph isomorphic to a graph $G^*$ for which the associated subgraph $\mathcal{G}^*\subset \Lambda_{m,n}$  of the lattice graph contains a cycle in which the reflex vertices are at most \emph{$L$-removable} as shown in Fig.~\ref{closableVSNonClosable}. Then the associated rank-2 matrix can be completed exactly and stably through the level $O(L\vee N_r)$ where $N_r$ is the total number of reflex vertices of the cycle of the sum-of-squares hierarchy
\end{proposition}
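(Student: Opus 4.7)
The strategy is to turn the geometric notion of $L$-removability into an inductive chain of algebraic completions in the ideal $\mathcal{I}$ of~\eqref{polySystem01}, and then invoke the Positivstellensatz-style argument of~\cite{cosse2021stable} to express each missing entry in the truncation $\mathcal{I}_t$ with $t=O(L\vee N_r)$. First I would verify the base case: a plain rectangle on the lattice corresponds to a fully observed $2\times 2$ (resp.\ $3\times 3$) block, and the associated rank-$r$ minor together with the companion row/column provided by Definition~\ref{oneRemovableVertex} produces a linear equation for the unknown entry at the reflex corner. This is the exact analogue of the $K_{2,2}^{-}$ step used in~\cite{cosse2021stable}, lifted to the rank-$2$ setting through the chain~\eqref{bipartitePath}.

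The inductive step is driven by the filtration $C_0\supset C_1\supset \cdots\supset C_L$ of cycles arising from successively removing $1$-removable reflex vertices. At depth $\ell$, each $\ell$-removable vertex $\omega(v)$ of $C_0$ is $1$-removable with respect to $C_{\ell-1}$, meaning that the rectangle certifying its removability has all of its corners either originally observed or already expressed as polynomials in the data by the previous $\ell-1$ steps. Substituting these expressions into the minor attached to the rectangle yields a new polynomial identity lying in $\mathcal{I}$, in which the unknown $X_{ij}$ at $\omega(v)$ can be isolated. Iterating this for at most $L$ levels (and at most $N_r$ reflex vertices per level) shows that every entry on the interior of the cycle — hence every entry of $\bs X_0$, since the whole support is controlled by the cycle and its removals — lies in $\mathcal{I}_t$ for some $t=O(L\vee N_r)$. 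Uniqueness of the completion then follows by Theorem~\ref{theoremOnlystructure01} once one completable position is exhibited, and the stability estimate is obtained verbatim as in~\cite{cosse2021stable} from the resulting SoS certificate and the trace-norm moment relaxation.

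The main obstacle I anticipate is the \emph{degree bookkeeping} along the inductive chain. A naive substitution of the level-$(\ell-1)$ expressions into the level-$\ell$ minor would multiply degrees and yield an exponential blow-up in $\ell$; to recover the claimed additive bound $O(L\vee N_r)$ one must choose the removal order so that at each step only one new unknown is eliminated, the others entering linearly through previously-proven linear relations in $\mathcal{I}$. This should follow from the fact that the companion row/column appearing in Definition~\ref{oneRemovableVertex} always shares $L_x-1$ (resp.\ $L_y-1$) of its entries with the previously closed subcycle, forcing the increment to be of constant degree. A secondary but more routine difficulty is to track the stability constants across the $L$ substitutions so that they do not accumulate multiplicatively; here one reuses the perturbation lemmas of~\cite{cosse2021stable}, whose hypotheses are preserved under the local structure of a single rectangle and therefore propagate along the cycle.
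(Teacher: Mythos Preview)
Your inductive scheme on the filtration $C_0\supset C_1\supset\cdots\supset C_L$ is essentially the same mechanism the paper uses for the \emph{interior} of the cycle, and your worry about degree bookkeeping is exactly the right one: the paper resolves it not by substitution but by the factorisation
\[
\det (X_0)_{I,J}^{(i,j)(k,\ell)}(x_1,x_2)=(x_1-(x_0)_1)\,p(x_2),
\]
which lets one peel off one variable at a time using an already-established degree-$1$ relation $x_1-(x_0)_1\in\mathcal{I}$, so the increment is constant rather than multiplicative. Your plan of ``only one new unknown eliminated, the others entering linearly'' is precisely this, so on the interior you are aligned with the paper.

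There are, however, two genuine gaps. First, the removals only fill the \emph{interior} of the rectilinear polygon. Your sentence ``every entry of $\bs X_0$, since the whole support is controlled by the cycle and its removals'' is not justified: positions in $\Lambda_{m,n}$ lying outside the cycle are untouched by the reflex-vertex procedure. The paper handles this separately by invoking the Jordan curve theorem to split $\Lambda_{m,n}$ into $\mathrm{Int}(\omega)$ and $\mathrm{Ext}(\omega)$, and then observes that because the cycle spans the full set of row and column indices, every exterior position $(i,j)$ sits in a rectangle bounded by two convex turns of the cycle, so one can again find $K_1,K_2,M_1,M_2$ with the eight companion entries already in $\mathrm{Int}(\omega)$ and run the same minor argument. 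You need this second pass; without it the proof does not cover all of $[m]\times[n]$.

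Second, your appeal to Theorem~\ref{theoremOnlystructure01} for uniqueness is misplaced. That theorem says completability of a given position depends only on the pattern, not on the values; it does \emph{not} say that exhibiting one completable position forces all positions to be completable. Uniqueness here is obtained directly: each $3\times3$ minor equation with a single unknown is (generically) a nondegenerate linear equation in that unknown, so the value is determined. No transfer principle is needed or available.
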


\begin{proof}
Given the self-avoiding cycle $\omega$, we define the embedded polygonal curve $\Gamma$ as
\begin{align}
\Gamma = \bigcup_{t=0}^{L_1} [\omega(t), \omega(t+1)]
\end{align}
By the Jordan curve theorem $\mathbb{Z}^2\setminus \Gamma$ has exactly two components, one bounded (the interior) and one unbounded (the exterior).  We use $\text{Int}(\omega)$ to denote the closure of the bounded component of $\mathbb{Z}_2\setminus \Gamma$ and accordingly $\text{Ext}(\omega) = \Lambda_{m,n}\setminus \text{Int}(\omega)$.  We start by considering the interior.  For every $1$-removable vetex,  we consider any of the four rectangles described in Definition~\ref{oneRemovableVertex}.  Let us take any position $(i,j)$ inside this rectangle. From Definition~\ref{oneRemovableVertex}, there exist $K_1, K_2$ and $M_1, M_2$ such that $(i, j-K_1), (i, j+K_2), (i+M_1, j), (i+M_2, j), (i+K_1, j+M_1), (i+K_2, j+M_1), (i+K_1, j+M_2), (i+K_2, j+M_2)\in\omega$.  Let $I = (i, i+K_1, i+K_2)$ and $J = (j, j+M_1, j+M_2)$ and let us use
$(X_0)_{I,J}^{(i,j)}$ to denote the submatrix of $X_0$ where we replace the entry at position $(i,j)$ by the variable $x$ and the monomial $X_{ij} - (X_0)_{ij}$ belongs to the ideal generated by the minors obtained after substitution of the known entries, i.e.  $X_{ij} - (X_0)_{ij}\in I_M$.  Then the equation $\text{det} (X_0)_{I,J}^{(i,j)}(x) = 0$ generically has a single solution that corresponds to the value of $x$.  Proceeding like this, one can ``close the rectangle" and uniquely determine the values of all the entries in this rectangle.  Once all the $1$-removable vertices have been ``closed", we proceed with the $2$-removable vertices and so on.  For 2-removable vertices, one can again apply the procedure to any position $(k, \ell)$ located inside the rectangle with the minor $(X_0)_{I,J}$ having at most two missing entries. If there is a single missing entry, we apply the same procedure as for 1-removable vertices. In the case where there $2$ missing entries, consider the polynomial $\text{det} (X_0)_{I,J}^{(i,j)(k, \ell)}(x_1, x_2) = 0$. Since this polynomial vanishes at $x_1 = (x_0)_1$ one can always write it as 
\begin{align}
\text{det} (X_0)_{I,J}^{(i,j)(k, \ell)}(x_1, x_2)  = (x_1 - (x_0)_1) p(x_2)
\end{align}
where $p(x_2)$ is of the form $p(x_2) = \alpha x_2 + \beta$.  In particular one can always eliminate the $x_1x_2$ and $x_1$ terms from $\text{det} (X_0)_{I,J}^{(i,j)(k, \ell)}(x_1, x_2) $ to express a polynomial $q(x_2) = \alpha'x_2 + \beta'$ as 
\begin{align}
\alpha'x_2 + \beta' = \text{det} (X_0)_{I,J}^{(i,j)(k, \ell)}(x_1, x_2)  - C_1(x_1 - (x_0)_1)x_2 - C_2(x_1 - (x_0)_1)
\end{align}
for appropriate constants $C_1, C_2$.  Since both $\text{det} (X_0)_{I,J}^{(i,j)(k, \ell)}(x_1, x_2)$ and $(x_1 - (x_0)_1)$ are in $I_M$, so is $\alpha'x_2 + \beta'$. We can proceed like this for all the remaining vertices obtaining a decomposition for $X_{ij} - (X_0)_{ij}$ of degree at most $L$ from elements of $I_M$. For all the missing entries inside the cycle we can thus write $X_{ij} - (X_0)_{ij}\in I_M^{(L)}$ where $I_M^{(L)}$ is the truncation of the ideal generated by the degree $3$ minors to degree $L$.  Combined with the results of~\cite{cosse2021stable} this implies that the interior of the cycle can be completed exactly and stably from the level-$L$ of the sos hierarchy combined with the trace norm.  

For the exterior of the rectilinear polygon, since the cycle spans the whole set of row and column indices, either it is given by the boundary of $\Lambda_{m,n}$ or we can iteratively complete each of the rectangles whose sides are defined by the subpaths located between two turns as shown in Fig,~\ref{completionExterior}.  For any position $(i,j)$ on the exterior and inside those rectangles, since the cycle is self-avoiding,  there must exist $K_1, K_2$ and $M_1, M_2$ such that $(i+K_1, j), (i+K_2, j), (i, j+M_1), (i, j+M_2), (i+K_1, j+K_1), (i+K_1, j+K_2), (i+K_2, j+K_1), (i+K_2, j+K_2) \in \text{Int}(\omega)$. As before we can then express the corresponding monomials $X_{ij} - (X_{0})_{ij}$ from the ideal generated by the rank-2 minors. 
\end{proof}

\begin{figure}
\centering 

\begin{minipage}{.45\linewidth}
\centering

\begin{tikzpicture}[scale=0.5]

\draw[fill=blue!30] (0,9) rectangle (1,10);
\draw[fill=blue!30] (0,8) rectangle (1,9);
\draw[fill=blue!30] (1,8) rectangle (2,9);
\draw[fill=blue!30] (2,8) rectangle (3,9);
\draw[fill=blue!30] (2,7) rectangle (3,8);
\draw[fill=blue!30] (3,7) rectangle (4,8);
\draw[fill=blue!30] (4,7) rectangle (5,8);
\draw[fill=blue!30] (5,7) rectangle (6,8);
\draw[fill=blue!30] (5,6) rectangle (6,7);
\draw[fill=blue!30] (6,6) rectangle (7,7);
\draw[fill=blue!30] (6,5) rectangle (7,6);
\draw[fill=blue!30] (6,4) rectangle (7,5);
\draw[fill=blue!30] (7,4) rectangle (8,5);
\draw[fill=blue!30] (7,3) rectangle (8,4);
\draw[fill=blue!30] (7,2) rectangle (8,3);
\draw[fill=blue!30] (8,2) rectangle (9,3);
\draw[fill=blue!30] (8,1) rectangle (9,2);
\draw[fill=blue!30] (9,1) rectangle (10,2);

\draw[fill=blue!30] (9,0) rectangle (10,1);

\draw[fill=blue!15] (0,7) rectangle (2,8);
\draw[fill=blue!15] (2,6) rectangle (5,7);
\draw[fill=blue!15] (1,9) rectangle (3,10);
\draw[fill=blue!15] (3,8) rectangle (6,9);

\draw[fill=blue!15] (5,4) rectangle (6,6);
\draw[fill=blue!15] (7,5) rectangle (8,7);
\draw[fill=blue!15] (6,7) rectangle (7,8);

\draw[fill=blue!15] (8,3) rectangle (9,5);

\draw[fill=blue!15] (6,2) rectangle (7,4);
\draw[fill=blue!15] (7,1) rectangle (8,2);
\draw[fill=blue!15] (8,0) rectangle (9,1);

\draw[fill=blue!15] (9,2) rectangle (10,3);

\foreach \x in {0,...,10} \draw[gray!50] (\x,0) -- (\x,10);
\foreach \y in {0,...,10} \draw[gray!50] (0,\y) -- (10,\y);

\draw[line width=1.5pt] 
    (0,9) -- (0,8) -- (1,8) -- (2,8) -- (2,7) -- (3,7) -- (4,7) -- (5,7) -- (5,6) -- (6,6) -- (6,5) -- (6,4) -- (7,4) -- (7,3) -- (7,2)--(8,2) -- (8,1) -- (9,1) -- (9,0) -- (10,0) --  (10,2)--(9,2) --(9,3) -- (8,3) -- (8,4) --  (8,5) --(7,5) --  (7,6) -- (7,7) -- (6,7) -- (6,8)--  (5,8) -- (4,8) -- (3,8) -- (3,9) -- (2,9) -- (1,9) -- (1,10) -- (0,10) -- (0,9) ;

\end{tikzpicture}
\end{minipage}
\begin{minipage}{.45\linewidth}
\centering
\begin{tikzpicture}[x={(1cm,0cm)}, y={(0.5cm,0.5cm)}, z={(0cm,1cm)}]

\def\s{0.5}       
\def\op{0.35}     
\def\N{5}         

\foreach \i in {0,...,5} {
    \foreach \j in {0,...,5} {
        \draw[gray!40] (\i*\s,\j*\s,0) -- (\i*\s,\j*\s,\N*\s);
        \draw[gray!40] (\i*\s,0,\j*\s) -- (\i*\s,\N*\s,\j*\s);
        \draw[gray!40] (0,\i*\s,\j*\s) -- (\N*\s,\i*\s,\j*\s);
    }
}

\newcommand{\cube}[5]{%
  \coordinate (A) at (#1,       #2,       #3);
  \coordinate (B) at (#1+#4,    #2,       #3);
  \coordinate (C) at (#1+#4,    #2+#4,    #3);
  \coordinate (D) at (#1,       #2+#4,    #3);
  \coordinate (E) at (#1,       #2,       #3+#4);
  \coordinate (F) at (#1+#4,    #2,       #3+#4);
  \coordinate (G) at (#1+#4,    #2+#4,    #3+#4);
  \coordinate (H) at (#1,       #2+#4,    #3+#4);

  \fill[blue!50, opacity=#5, opacity=#5] (A)--(B)--(C)--(D)--cycle; 
  \fill[blue!50, opacity=#5, opacity=#5] (E)--(F)--(G)--(H)--cycle; 
  \fill[blue!50, opacity=#5, opacity=#5] (A)--(B)--(F)--(E)--cycle; 
  \fill[blue!50, opacity=#5, opacity=#5] (B)--(C)--(G)--(F)--cycle; 
  \fill[blue!50, opacity=#5, opacity=#5] (C)--(D)--(H)--(G)--cycle; 
  \fill[blue!50, opacity=#5, opacity=#5] (D)--(A)--(E)--(H)--cycle; 
  \draw[gray!40] (A)--(B)--(C)--(D)--cycle;
  \draw[gray!40] (E)--(F)--(G)--(H)--cycle;
  \draw[gray!40] (A)--(E);
  \draw[gray!40] (B)--(F);
  \draw[gray!40] (C)--(G);
  \draw[gray!40] (D)--(H);
}


\foreach \z in {0,1,2,3,4} {
  \pgfmathsetmacro{\zflip}{\N*\s - \z*\s - \s}
  \cube{0}{4*\s}{\zflip}{\s}{\op}
}

\foreach \x in {0,1,2,3,4} {
  \pgfmathsetmacro{\zflip}{\N*\s - 4*\s - \s} 
  \cube{\x*\s}{4*\s}{\zflip}{\s}{\op}
}

\foreach \y in {4,3,2,1,0} {
  \pgfmathsetmacro{\zflip}{\N*\s - 4*\s - \s} 
  \cube{4*\s}{\y*\s}{\zflip}{\s}{\op}
}

\draw[line width=1.2pt, black] (5*\s,0,0) -- (5*\s,5*\s,0);
\draw[line width=1.2pt, black] (\s,5*\s,\s) -- (5*\s,5*\s,\s);
\draw[line width=1.2pt, black] (5*\s,5*\s,0) -- (5*\s,5*\s,\s);

\draw[line width=1.2pt, black] (\s,5*\s,\s) -- (\s,5*\s,5*\s);

\draw[line width=1.2pt, black] (0,4*\s,0) -- (0,4*\s,5*\s);

\draw[line width=1.2pt, black] (0,4*\s,5*\s) -- (0,5*\s,5*\s);

\draw[line width=1.2pt, black] (0,5*\s,5*\s) -- (\s,5*\s,5*\s);

\draw[line width=1.2pt, black] (0,4*\s,0) -- (3*\s,4*\s,0);

\draw[line width=1.2pt, black] (4*\s,0,\s) -- (4*\s,2*\s,\s);

\draw[line width=1.2pt, black] (4*\s,0,0) -- (5*\s,0,0);

\draw[line width=1.2pt, black] (4*\s,0,\s) -- (4*\s,0,0);

\end{tikzpicture}
\end{minipage}

\caption{\label{completionExterior}An example of a polyomino type cycle covering the full set of column and row indices and the corresponding first step (degree 3) of the completion of the exterior. The (white) cells are assumed to be of size $3$ at least. The same idea carries over to the tensor setting replacing the notion of bipartite graphs by the more general notion of hypergraphs illustrated in Fig.~\ref{illustrationHypergraph01}. }
\end{figure}

\begin{figure}
\centering
\begin{minipage}{.45\linewidth}
\begin{tikzpicture}
    \node (v1) at (0,3) {};
    \node (v2) at (0,2) {};
    \node (v3) at (0,1) {};
    
    \node (v4) at (3,3) {};
    \node (v5) at (3,2) {};
    \node (v6) at (3,1) {};
    
    \node (v7) at (6,3) {};
    \node (v8) at (6,2) {};
    \node (v9) at (6,1) {};

    \begin{scope}[fill opacity=0.4]
        \fill[yellow!70]
            ($(v1)+(-0.2,0)$)
            to[out=90,in=180] ($(v5)+(0,0.3)$)
            to[out=0,in=90] ($(v9)+(0.2,0)$)
            to[out=270,in=0] ($(v5)+(0,-0.3)$)
            to[out=180,in=270] ($(v1)+(-0.2,0)$);
        
        \fill[blue!70]
            ($(v2)+(-0.2,0)$)
            to[out=90,in=180] ($(v4)+(0,0.3)$)
            to[out=0,in=90] ($(v8)+(0.2,0)$)
            to[out=270,in=0] ($(v4)+(0,-0.3)$)
            to[out=180,in=270] ($(v2)+(-0.2,0)$);
        
        \fill[green!70]
            ($(v3)+(-0.2,0)$)
            to[out=90,in=180] ($(v6)+(0,0.3)$)
            to[out=0,in=90] ($(v7)+(0.2,0)$)
            to[out=270,in=0] ($(v6)+(0,-0.3)$)
            to[out=180,in=270] ($(v3)+(-0.2,0)$);
        
        \fill[red!70]
            ($(v1)+(-0.2,0)$)
            to[out=90,in=180] ($(v4)+(0,0.3)$)
            to[out=0,in=90] ($(v9)+(0.2,0)$)
            to[out=270,in=0] ($(v4)+(0,-0.3)$)
            to[out=180,in=270] ($(v1)+(-0.2,0)$);
    \end{scope}

    \foreach \i in {1,...,9} {
        \fill (v\i) circle (0.1);
    }

    \node[below] at (v1) {$v_1$};
    \node[below] at (v2) {$v_2$};
    \node[below] at (v3) {$v_3$};
    \node[below] at (v4) {$v_4$};
    \node[below] at (v5) {$v_5$};
    \node[below] at (v6) {$v_6$};
    \node[below] at (v7) {$v_7$};
    \node[below] at (v8) {$v_8$};
    \node[below] at (v9) {$v_9$};

    \node at (2,3.5) {$e_1$};
    \node at (2,2.5) {$e_2$};
    \node at (3.5,1.5) {$e_3$};
    \node at (3,3.2) {$e_4$};
\end{tikzpicture}
\end{minipage}
\begin{minipage}{.45\linewidth}
\begin{tikzpicture}[x=(15:.5cm), y=(90:.5cm), z=(330:.5cm), >=stealth]
\draw (0, 0, 0) -- (0, 0, 10) (3, 0, 0) -- (3, 0, 10);
\foreach \z in {0, 5, 10} \foreach \x in {0,...,2}
  \foreach \y [evaluate={\b=random(0, 0);}] in {0,...,2}
    \filldraw [fill=white] (\x, \y, \z) -- (\x+1, \y, \z) -- (\x+1, \y+1, \z) --
      (\x, \y+1, \z) -- cycle (\x+.5, \y+.5, \z) node [yslant=tan(15)] {};
\draw [dashed] (0, 3, 0) -- (0, 3, 10) (3, 3, 0) -- (3, 3, 10);
\draw [->] (0, 3.5, 0)  -- (3, 3.5, 0)   node [near end, above left] {};
\draw [->] (-.5, 3, 0)  -- (-.5, 0, 0)   node [midway, left] {};
\draw [->] (3, 3.5, 10) -- (3, 3.5, 2.5) node [near end, above right] {};
\filldraw [fill=white] (0.5,0.5,10) node [yslant=tan(15)]  {1};
\filldraw [fill=white] (0.5,2.5,0) node [yslant=tan(15)]  {1};
\filldraw [fill=white] (1.5,2.5,0) node [yslant=tan(15)]  {1};
\filldraw [fill=white] (0.5,1.5,5) node [yslant=tan(15)]  {1};
\filldraw [fill=white] (1.5,1.5,5) node [yslant=tan(15)]  {0};
\filldraw [fill=white] (1.5,2.5,5) node [yslant=tan(15)]  {0};
\filldraw [fill=white] (2.5,1.5,5) node [yslant=tan(15)]  {0};
\filldraw [fill=white] (2.5,2.5,5) node [yslant=tan(15)]  {0};

\filldraw [fill=white] (0.5,2.5,5) node [yslant=tan(15)]  {0};
\filldraw [fill=white] (0.5,0.5,5) node [yslant=tan(15)]  {0};
\filldraw [fill=white] (1.5,0.5,5) node [yslant=tan(15)]  {0};
\filldraw [fill=white] (2.5,0.5,5) node [yslant=tan(15)]  {0};

\filldraw [fill=white] (1.5,0.5,10) node [yslant=tan(15)]  {0};
\filldraw [fill=white] (2.5,0.5,10) node [yslant=tan(15)]  {0};

\filldraw [fill=white] (0.5,1.5,10) node [yslant=tan(15)]  {0};
\filldraw [fill=white] (1.5,1.5,10) node [yslant=tan(15)]  {0};
\filldraw [fill=white] (2.5,1.5,10) node [yslant=tan(15)]  {0};
\filldraw [fill=white] (0.5,2.5,10) node [yslant=tan(15)]  {0};
\filldraw [fill=white] (1.5,2.5,10) node [yslant=tan(15)]  {0};
\filldraw [fill=white] (2.5,2.5,10) node [yslant=tan(15)]  {0};

\filldraw [fill=white] (0.5,0.5,0) node [yslant=tan(15)]  {0};
\filldraw [fill=white] (1.5,0.5,0) node [yslant=tan(15)]  {0};
\filldraw [fill=white] (2.5,0.5,0) node [yslant=tan(15)]  {0};
\filldraw [fill=white] (0.5,1.5,0) node [yslant=tan(15)]  {0};
\filldraw [fill=white] (1.5,1.5,0) node [yslant=tan(15)]  {0};
\filldraw [fill=white] (2.5,1.5,0) node [yslant=tan(15)]  {0};

\filldraw [fill=white] (2.5,2.5,0) node [yslant=tan(15)]  {0};

\end{tikzpicture}
\end{minipage}
\caption{\label{illustrationHypergraph01}An example of a $3$-partite hypergraph and the associated tensor mask. }
\end{figure}
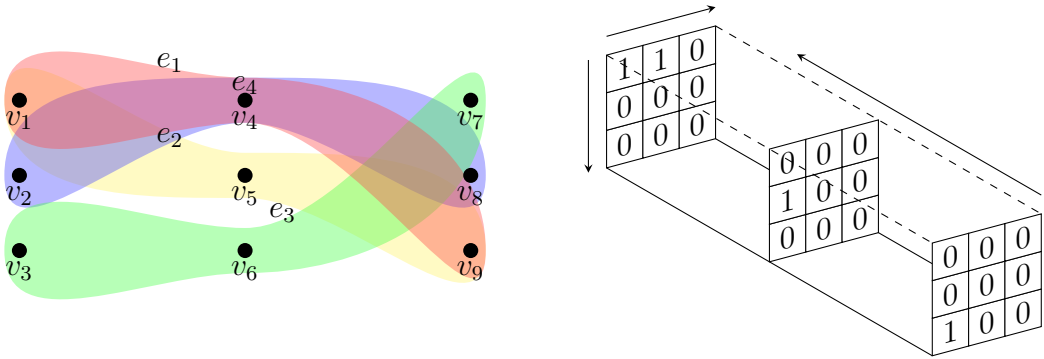

\begin{figure}
\centering 

\begin{minipage}{.45\linewidth}
\centering
\begin{tikzpicture}[scale=0.5]


\draw[fill=blue!35] (2,5) rectangle (3,6);
\draw[fill=blue!25] (1,4) rectangle (4,5);
\draw[fill=blue!15] (1,0) rectangle (5,1);
\draw[fill=blue!5] (0,1) rectangle (6,4);

\foreach \x in {0,...,6} \draw[gray!50] (\x,0) -- (\x,6);
\foreach \y in {0,...,6} \draw[gray!50] (0,\y) -- (6,\y);

\draw[line width=1.5pt] 
    (0,1) -- (1,1)--(1,0) -- (2,0) --(3,0) -- (4,0) -- (5,0) --(5,1)-- (6,1) -- (6,2) -- (6,3) -- (6,4) -- (5,4) -- (4,4) --(4,5) -- (3,5) --  (3,6) -- (2,6) --  (2,5) -- (1,5)-- (1,4) -- (0,4) -- (0,3)-- (0,2) -- (0,1);

\end{tikzpicture}
\end{minipage}
\begin{minipage}{.45\linewidth}
\centering
\begin{tikzpicture}[scale=0.5]


\foreach \x in {0,...,6} \draw[gray!50] (\x,0) -- (\x,6);
\foreach \y in {0,...,6} \draw[gray!50] (0,\y) -- (6,\y);

\draw[line width=1.5pt] 
    (0,1) -- (1,1)--(2,1) -- (2,0) --(3,0) -- (4,0) -- (5,0) --(5,1)-- (6,1) -- (6,2) -- (6,3) -- (6,4) -- (5,4) -- (4,4) --(4,6) --  (3,6) -- (2,6)-- (1,6)-- (1,4) -- (0,4) -- (0,3)-- (0,2) -- (0,1);

\end{tikzpicture}
\end{minipage}

\caption{\label{closableVSNonClosable}Examples of removable and non removable vertices (see Definition~\ref{oneRemovableVertex})}
\end{figure}
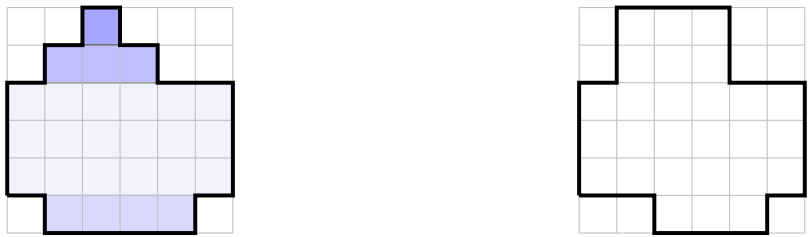

%
%
%
%

\begin{figure}
\begin{center}
\begin{tikzpicture}[scale=0.5]

\foreach \x in {0,...,10} \draw[gray!50] (\x,0) -- (\x,10);
\foreach \y in {0,...,10} \draw[gray!50] (0,\y) -- (10,\y);

\draw[line width=1.5pt] 
    (0,10) -- (0,0) -- (10,0) ;

\draw[line width=1.5pt] 
    (0,10) -- (2,10) -- (2, 6) -- (6, 6) -- (6,2)-- (10,2) -- (10,0);

\draw[line width=1.5pt] 
    (0,10) -- (4,10) -- (4, 8) -- (6, 8) -- (6,7) -- (8,7) -- (8,4)-- (9,4)-- (9,3)-- (10,3) -- (10,0);

\end{tikzpicture}
\end{center}
\caption{\label{patternNestedExample}Example of nested structure required for exact completion.  The condition states that there should exist and ordering of the self avoiding walks so that the steps of walk $\omega^{(t+1)}$ are enclosed in the steps of walk $\omega^{(t)}$. }
\end{figure}
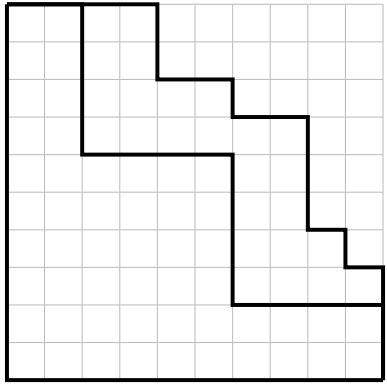

\subsubsection{\label{generalCase}General case}

In a first time, we will only consider almost self avoiding walks that consist of sequences of the form $F^{k_1} R F^{k_2} L F^{k_3} R F^{k_4} $ where $F,  R$ and $L$ respectively denote \emph{``forward"}, \emph{``right"} or \emph{``left"} moves.  For simplicity we let all the path start at position (0,0) and we assume that the overlap is minimal. 
In this case we have a staircase structure similar to the one shown in Fig.~\ref{rank3graph01}.  In the general low rank setting, as expected, we will need additional path (all of them spanning the whole range of column and row indices). As a result, we will have to consider the extension from walks to lattice trees and the non overlapping condition that we had in rank-2 will turn into a condition of minimal overlap and of non occlusion when such an overlap is unavoidable.  We start by clarifying those ideas below. 

\begin{condition}[\label{nonOcclusion}Non occlusion]
For a collection of lattice trees $\left\{T^{(0)}, \ldots, T^{(r)}\right\}$ on $\Lambda_{m,n}\subset \mathbb{Z}^2$, we say that the trees are non occluding if when there is an overlap between two trees $T^{(i)}$ and $T^{(j)}$,  the set of row/column indices that is spanned by the intersection of the two trees has to be covered by another subtree from either $T^{(i)}$ or $T^{(j)}$ that does not lie in the intersection. Mathematically,  let $P = \left\{T^{(i)}\cap T^{(j)}\right\}$ and let
\begin{align}
S_1(T^{(i)}\cap T^{(j)}) = \left\{k\in \Lambda_{m,n}\;|\; \exists \omega = (\omega_1, \omega_2)\in T^{(i)}\cap T^{(j)}\; \text{with} \; \omega_1 = k \right\}\end{align} 
\begin{align}
S_2(T^{(i)}\cap T^{(j)}) = \left\{\ell \in \Lambda_{m,n} \;|\; \exists \omega = (\omega_1, \omega_2)\in T^{(i)}\cap T^{(j)}\; \text{with} \; \omega_2 = \ell \right\}\end{align}
For every $k\in  S_1(T^{(i)}\cap T^{(j)})$ $\ell\in  S_2\left(T^{(i)}\cap T^{(j)})\right)$ there must exist $T'\subset T^{(i)}\cup T^{(j)}\setminus (T^{(i)}\cap T^{(j)})$ with $k\in S_1(T')$ and $\ell\in S_2(T')$.
\end{condition}

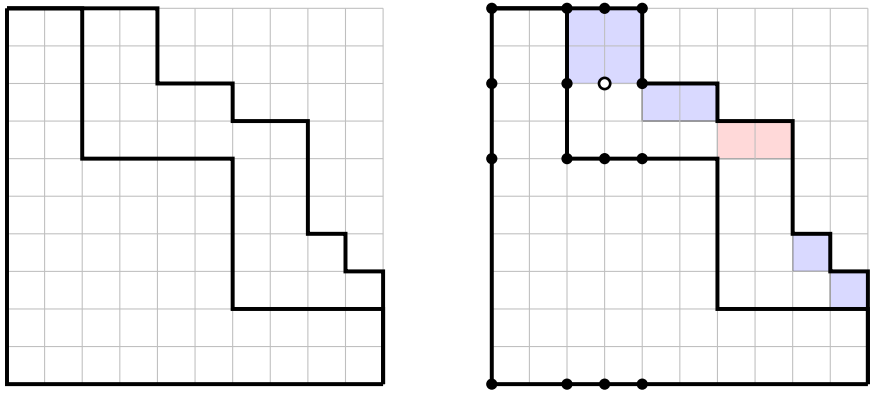
\begin{figure}
\centering 

\begin{minipage}{.45\linewidth}
\centering

\begin{tikzpicture}[scale=0.5]

%
%
%
%
%
%
%

\foreach \x in {0,...,10} \draw[gray!50] (\x,0) -- (\x,10);
\foreach \y in {0,...,10} \draw[gray!50] (0,\y) -- (10,\y);

\draw[line width=1.5pt] 
    (0,10) -- (0,0) -- (10,0) ;

\draw[line width=1.5pt] 
    (0,10) -- (2,10) -- (2, 6) -- (6, 6) -- (6,2)-- (10,2) -- (10,0);

\draw[line width=1.5pt] 
    (0,10) -- (4,10) -- (4, 8) -- (6, 8) -- (6,7) -- (8,7) -- (8,4)-- (9,4)-- (9,3)-- (10,3) -- (10,0);

\end{tikzpicture}
\end{minipage}
\begin{minipage}{.45\linewidth}
\begin{tikzpicture}[scale=0.5]

%
%
%
%
%
%
%

\draw[fill=blue!15] (2,8) rectangle (4,10);

\draw[fill=red!15] (6,6) rectangle (8,7);

\draw[fill=blue!15] (4,7) rectangle (6,8);


\draw[fill=blue!15] (8,3) rectangle (9,4);

\draw[fill=blue!15] (9,2) rectangle (10,3);

\foreach \x in {0,...,10} \draw[gray!50] (\x,0) -- (\x,10);
\foreach \y in {0,...,10} \draw[gray!50] (0,\y) -- (10,\y);

\draw[line width=1.5pt] 
    (0,10) -- (0,0) -- (10,0) ;

\draw[line width=1.5pt] 
    (0,10) -- (2,10) -- (2, 6) -- (6, 6) -- (6,2)-- (10,2) -- (10,0);

\draw[line width=1.5pt] 
    (0,10) -- (4,10) -- (4, 8) -- (6, 8) -- (6,7) -- (8,7) -- (8,4)-- (9,4)-- (9,3)-- (10,3) -- (10,0);
\node at (0,0)[circle,fill,inner sep=1.5pt]{};
\node at (2,0)[circle,fill,inner sep=1.5pt]{};
\node at (3,0)[circle,fill,inner sep=1.5pt]{};
\node at (4,0)[circle,fill,inner sep=1.5pt]{};

\node at (0,10)[circle,fill,inner sep=1.5pt]{};
\node at (2,10)[circle,fill,inner sep=1.5pt]{};
\node at (3,10)[circle,fill,inner sep=1.5pt]{};
\node at (4,10)[circle,fill,inner sep=1.5pt]{};

\node at (0,6)[circle,fill,inner sep=1.5pt]{};
\node at (2,6)[circle,fill,inner sep=1.5pt]{};
\node at (3,6)[circle,fill,inner sep=1.5pt]{};
\node at (4,6)[circle,fill,inner sep=1.5pt]{};

\node at (0,8)[circle,fill,inner sep=1.5pt]{};
\node at (2,8)[circle,fill,inner sep=1.5pt]{};
\node at (3,8)[circle,fill =white, draw=black, inner sep=1.5pt, line width=1pt]{};
\node at (4,8)[circle,fill,inner sep=1.5pt]{};

\end{tikzpicture}
\end{minipage}
\caption{\label{rank3graph01}An example of a $C$-graph for $r=3$ . For the conditions in Proposition~\ref{generalLowRankNestedSteps} to be satisfied we assume that every white cell is of size at least $3$}
\end{figure}

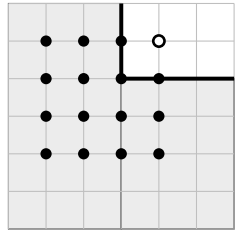
\begin{figure}
\centering 
\begin{minipage}{.3\linewidth}
\centering
\begin{tikzpicture}[scale=0.5]

\draw[fill=gray!15] (0,0) rectangle (3,6);
\draw[fill=gray!15] (3,0) rectangle (6,4);

\foreach \x in {0,...,6} \draw[gray!50] (\x,0) -- (\x,6);
\foreach \y in {0,...,6} \draw[gray!50] (0,\y) -- (6,\y);

\draw[line width=1.5pt] 
   (3,6) -- (3,4)-- (6,4) ;

\node at  (4,5)[circle,fill =white, draw=black, inner sep=1.5pt, line width=1pt]{};
\node at  (4,4)[circle,fill,inner sep=1.5pt]{};
\node at  (3,5)[circle,fill,inner sep=1.5pt]{};
\node at  (3,4)[circle,fill,inner sep=1.5pt]{};
\node at  (2,4)[circle,fill,inner sep=1.5pt]{};
\node at  (1,4)[circle,fill,inner sep=1.5pt]{};

\node at  (1,5)[circle,fill,inner sep=1.5pt]{};
\node at  (2,5)[circle,fill,inner sep=1.5pt]{};

\node at  (1,3)[circle,fill,inner sep=1.5pt]{};
\node at  (2,3)[circle,fill,inner sep=1.5pt]{};
\node at  (3,3)[circle,fill,inner sep=1.5pt]{};
\node at  (4,3)[circle,fill,inner sep=1.5pt]{};

\node at  (1,2)[circle,fill,inner sep=1.5pt]{};
\node at  (2,2)[circle,fill,inner sep=1.5pt]{};
\node at  (3,2)[circle,fill,inner sep=1.5pt]{};
\node at  (4,2)[circle,fill,inner sep=1.5pt]{};

\end{tikzpicture}
\end{minipage}

\caption{Completion of the exterior}
\end{figure}

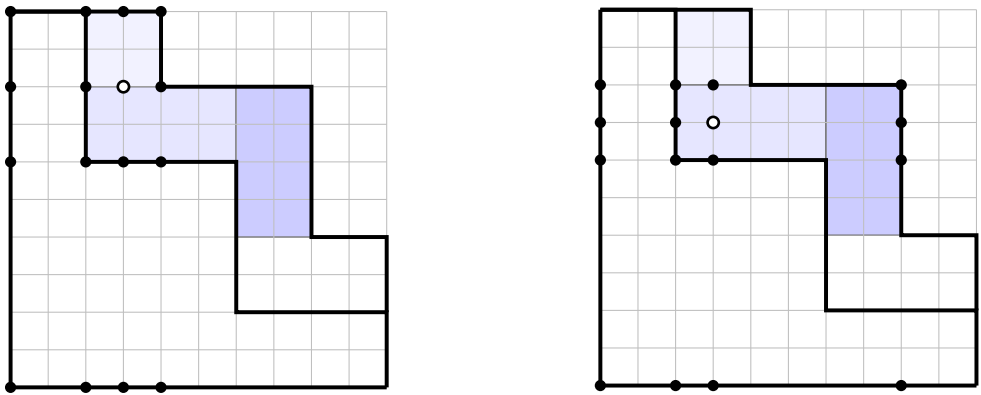
\begin{figure}

\begin{minipage}{.45\linewidth}
\centering

\begin{tikzpicture}[scale=0.5]

\draw[fill=blue!5] (2,8) rectangle (4,10);

\draw[fill=blue!10] (2,6) rectangle (6,8);

\draw[fill=blue!20] (6,4) rectangle (8,8);

\foreach \x in {0,...,10} \draw[gray!50] (\x,0) -- (\x,10);
\foreach \y in {0,...,10} \draw[gray!50] (0,\y) -- (10,\y);

\draw[line width=1.5pt] 
    (0,10) -- (0,0) -- (10,0) ;

\draw[line width=1.5pt] 
    (0,10) -- (2,10) -- (2, 6) -- (6, 6) -- (6,2)-- (10,2) -- (10,0);

\draw[line width=1.5pt] 
    (0,10) -- (4,10) -- (4, 8) -- (8, 8) -- (8,6) -- (8,4) -- (10,4)-- (10,2);

\node at (0,8)[circle,fill,inner sep=1.5pt]{};
\node at (2,8)[circle,fill,inner sep=1.5pt]{};
\node at (3,8)[circle,fill =white, draw=black, inner sep=1.5pt, line width=1pt]{};
\node at (4,8)[circle,fill,inner sep=1.5pt]{};

\node at (0,10)[circle,fill,inner sep=1.5pt]{};
\node at (2,10)[circle,fill,inner sep=1.5pt]{};
\node at (3,10)[circle,fill,inner sep=1.5pt]{};
\node at (4,10)[circle,fill,inner sep=1.5pt]{};

\node at (0,6)[circle,fill,inner sep=1.5pt]{};
\node at (2,6)[circle,fill,inner sep=1.5pt]{};
\node at (3,6)[circle,fill,inner sep=1.5pt]{};
\node at (4,6)[circle,fill,inner sep=1.5pt]{};

\node at (0,0)[circle,fill,inner sep=1.5pt]{};
\node at (2,0)[circle,fill,inner sep=1.5pt]{};
\node at (3,0)[circle,fill,inner sep=1.5pt]{};
\node at (4,0)[circle,fill,inner sep=1.5pt]{};

\end{tikzpicture}
\end{minipage}
\hspace{.1cm}
\begin{minipage}{.45\linewidth}
\centering

\begin{tikzpicture}[scale=0.5]

\draw[fill=blue!5] (2,8) rectangle (4,10);

\draw[fill=blue!10] (2,6) rectangle (6,8);

\draw[fill=blue!20] (6,4) rectangle (8,8);

\foreach \x in {0,...,10} \draw[gray!50] (\x,0) -- (\x,10);
\foreach \y in {0,...,10} \draw[gray!50] (0,\y) -- (10,\y);

\draw[line width=1.5pt] 
    (0,10) -- (0,0) -- (10,0) ;

\draw[line width=1.5pt] 
    (0,10) -- (2,10) -- (2, 6) -- (6, 6) -- (6,2)-- (10,2) -- (10,0);

\draw[line width=1.5pt] 
    (0,10) -- (4,10) -- (4, 8) -- (8, 8) -- (8,6) -- (8,4) -- (10,4)-- (10,2);

\node at (0,8)[circle,fill,inner sep=1.5pt]{};
\node at (2,8)[circle,fill,inner sep=1.5pt]{};
\node at (3,8)[circle,fill,inner sep=1.5pt]{};
\node at (8,8)[circle,fill,inner sep=1.5pt]{};

\node at (0,7)[circle,fill,inner sep=1.5pt]{};
\node at (2,7)[circle,fill,inner sep=1.5pt]{};
\node at (3,7)[circle,fill =white, draw=black, inner sep=1.5pt, line width=1pt]{};
\node at (8,7)[circle,fill,inner sep=1.5pt]{};

\node at (0,6)[circle,fill,inner sep=1.5pt]{};
\node at (2,6)[circle,fill,inner sep=1.5pt]{};
\node at (3,6)[circle,fill,inner sep=1.5pt]{};
\node at (8,6)[circle,fill,inner sep=1.5pt]{};

\node at (0,0)[circle,fill,inner sep=1.5pt]{};
\node at (2,0)[circle,fill,inner sep=1.5pt]{};
\node at (3,0)[circle,fill,inner sep=1.5pt]{};
\node at (8,0)[circle,fill,inner sep=1.5pt]{};

\end{tikzpicture}
\end{minipage}
\caption{\label{rank302a}The completion of the inner path region after removing the nested steps.  We assume that every (white) cell includes at least $3$ entries from the adjacency matrix. }
\end{figure}

\begin{figure}
\centering 

\begin{minipage}{.33\linewidth}
\centering

\begin{tikzpicture}[scale=0.5]

\draw[fill=blue!20] (2,8) rectangle (4,10);

\draw[fill=blue!20] (2,6) rectangle (6,8);

\draw[fill=blue!20] (6,4) rectangle (8,8);
\draw[fill=blue!20] (2,2) rectangle (6,6);

\draw[fill=blue!20] (6,2) rectangle (10,4);

\foreach \x in {0,...,10} \draw[gray!50] (\x,0) -- (\x,10);
\foreach \y in {0,...,10} \draw[gray!50] (0,\y) -- (10,\y);

\draw[line width=1.5pt] 
    (0,10) -- (0,0) -- (10,0) ;

\draw[line width=1.5pt] 
    (0,10) -- (2,10) -- (2, 6) -- (6, 6) -- (6,2)-- (10,2) -- (10,0);

\draw[line width=1.5pt] 
    (0,10) -- (4,10) -- (4, 8) -- (8, 8) -- (8,6) -- (8,4) -- (10,4)-- (10,2);

\node at (5,5)[circle,fill =white, draw=black, inner sep=1.5pt, line width=1pt]{};
\node at (5,6)[circle,fill,inner sep=1.5pt]{};
\node at (6,5)[circle,fill,inner sep=1.5pt]{};
\node at (6,6)[circle,fill,inner sep=1.5pt]{};
\node at (6,7)[circle,fill,inner sep=1.5pt]{};
\node at (7,6)[circle,fill,inner sep=1.5pt]{};
\node at (7,7)[circle,fill,inner sep=1.5pt]{};
\node at (5,7)[circle,fill,inner sep=1.5pt]{};
\node at (5,8)[circle,fill,inner sep=1.5pt]{};
\node at (7,5)[circle,fill,inner sep=1.5pt]{};
\node at (8,5)[circle,fill,inner sep=1.5pt]{};
\node at (8,6)[circle,fill,inner sep=1.5pt]{};
\node at (8,7)[circle,fill,inner sep=1.5pt]{};

\node at (6,8)[circle,fill,inner sep=1.5pt]{};
\node at (7,8)[circle,fill,inner sep=1.5pt]{};
\node at (8,8)[circle,fill,inner sep=1.5pt]{};

\draw[line width=1.5pt, red] 
    (0,8) -- (2,8) ;

\draw[line width=1.5pt, red] 
    (8,0) -- (8,2) ;

\end{tikzpicture}
\end{minipage}

\caption{\label{rank302b}The completion of the inner path region after removing the nested steps.  We assume that every (white) cell includes at least $3$ entries from the adjacency matrix.  The red segments correspond to the walks $\tilde{\omega}^{(j)}$ that are added to satisfy the non occlusion condition (see Condition~\ref{nonOcclusion} as well as Proposition~\ref{generalLowRankNestedSteps})}
\end{figure}
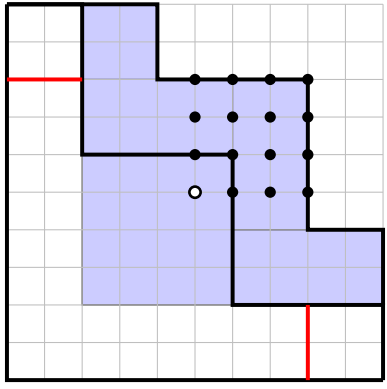

\begin{condition}[\label{nestedStep}Nested steps]

Let $\omega, \omega'$ denote two self avoiding paths that span the whole set of row and column indices in $\Lambda_{m,n}$.  Further let $\omega(j)$ denote any position in the path such that either $\omega_2(j) = m$ or $\omega(j) - \omega(j-1) = e_1$ and $\omega(j+1)-\omega(j) = e_2$.  Let us use $j'$ such that $j' = \max\left\{k| \langle \omega - \omega(j),e_1\rangle =  0\; \text{for all $\omega \in [\omega(j), \omega(k)]$}\right\}$ and let us label $j''$ the index $j'' = \max\left\{\ell| \langle \omega(j') - \omega,e_2\rangle \neq 0\; \text{for all $\omega \in [\omega(j'), \omega(\ell)]$}\right\}$. Then the sequence $(j,j',j'')$ must be such that for all $\ell'$ 
\begin{align}
x_{j}' = \max \left\{x \; \text{\upshape such that }[x, \omega^{(\ell)}_2(j)]\in \omega^{(\ell')}\right\}\\
y_{j}' = \max \left\{y\; \text{\upshape such that }[\omega^{(\ell)}_1(j''), y] \in \omega^{(\ell')}\right\}
\end{align}
\begin{align}
\underline{y} = \min \left\{y| (x_j' y)\in \omega^{(\ell')}\text{for some $x\in \mathbb{Z}$}\right\}\\
\underline{x} = \min \left\{x| (x, y_j')\in \omega^{(\ell')}\text{for some $y\in \mathbb{Z}$}\right\}
\end{align}

We say that the paths $\omega^{(\ell)}$ and $\omega^{(\ell')}$ satisfy a nested step structure if  $\underline{y}\leq y$ and $\underline{x}\leq x$. 
\end{condition}

We say that a set of paths $\left\{\omega^{(\ell)}\right\}$ satisfy a nested step structure if there exist an ordering of the paths $\omega^{(0)}, \ldots, \omega^{(r)}$ such that every pair $\omega^{(\ell')}, \omega^{(\ell)}$ for every $\ell'<\ell$ satisfy the nested step condition~\ref{nestedStep}.

Condition~\ref{nestedStep} garantees the existence of a sequence of nested steps in the adjacency matrix which in turn makes it possible to iteratively complete the entries as shown in the proof of Proposition~\ref{generalLowRankNestedSteps} 

\begin{condition}[\label{connectionCondition}Connection condition]
For every sequence $(j,j',j'')$ of left-right-left turns in $\omega^{(\ell)}$ that does not satisfy the nested step condition~\ref{nestedStep},  we require that there exist a sequence of turns $(k_1,k_2,k_3,k_4)$ in $\omega^{(\ell')}$ such that $\omega_2^{(\ell-1)}(k_1) = \omega_2^{(\ell)}(j)$, $\omega_1^{(\ell-1)}(k_3) = \omega_1^{(\ell)}(j)$,   $\omega_1^{(\ell-1)}(k_4) = \omega_1^{(\ell)}(j)$ and $\omega_1^{(\ell-1)}(k_5) = \omega_1^{(\ell)}(j'')$ as well as 
\begin{itemize}
\item for any $\omega\in [\omega^{(\ell-1)}(k_1), \omega^{(\ell-1)}(k_2)]$, we have $\langle \omega-\omega^{(\ell-1)}(k_1), e_1\rangle  = 0$
\item for any $\omega\in [\omega^{(\ell-1)}(k_2), \omega^{(\ell-1)}(k_3)]$, we have $\langle \omega-\omega^{(\ell-1)}(k_2), e_2\rangle =0$
\item for any $\omega\in [\omega^{(\ell-1)}(k_3), \omega^{(\ell-1)}(k_4)]$, we have $\langle \omega-\omega^{(\ell-1)}(k_3), e_1\rangle =0$
\item for any $\omega\in [\omega^{(\ell-1)}(k_4), \omega^{(\ell-1)}(k_5)]$, we have $\langle \omega-\omega^{(\ell-1)}(k_4), e_2\rangle =0$
\end{itemize}
\end{condition}

The connection condition ensures that when a step from walk $\omega^{(\ell)}$ is not nested entirely nested in a single step of the walk $\omega^{(\ell-1)}$, this step never spans more than two steps from $\omega^{(\ell-1)}$ as shown in Fig.~\ref{rank3graph01}. 

\begin{definition}[$C$-graph]
A graph $G$ is a $C$-graph for some $r>1$ if the subgraph $\mathcal{G}$ of the lattice graph generated from $\text{\upshape supp}(A_G)$ where $A_G$ is the bi-adjacency matrix of $G$ is of the form $\mathcal{G} = \cup_{i=1}^r \omega^{(r)}\cup_{j=1}^{s}\tilde{\omega}^{(j)}$ where $\left\{\omega^{(\ell)}\right\}_{\ell=1}^r$ are self avoiding walks that obey the nested steps and connection conditions~\ref{nestedStep}, ~\ref{connectionCondition} and $\tilde{\omega}^{s}$ are walks that are added to satisfy the non occlusion condition~\ref{nonOcclusion}.  Moreover we require 
\begin{align}
\min_{t, t'}\left\{|\omega(t)- \omega'(t')| \right\} > r
\end{align}
\end{definition}

We are now ready to state the extension of Proposition~\ref{propositionCycleSelfAvoiding01} to the general low rank completion setting. 

\begin{proposition}\label{generalLowRankNestedSteps}
Let $r$ be fixed and let $G = ([m], [n], E)$ be any bipartite graph that is isomorphic to a $C(r)$-graph $G^*$. Then any rank-$r$ matrix $ X_0$ can be recovered uniquely and stably from the knowledge of $(X_0)_{G}$ through the level $O(\kappa)$ of the sos hierarchy where $\kappa$ is the number of steps in the subgraph of the lattice graph defined from the support of the bi-adjacency matrix of $G^*$. 
\end{proposition}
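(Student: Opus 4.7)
The plan is to mirror the rank-2 argument of Proposition~\ref{propositionCycleSelfAvoiding01}, replacing the $3\times 3$ minors used there by $(r+1)\times(r+1)$ minors of a generic rank-$r$ matrix, and replacing the single self-avoiding cycle by the nested family $\omega^{(0)},\ldots,\omega^{(r)}$ together with the occlusion-repairing walks $\tilde\omega^{(j)}$. I would fix an ordering of the walks consistent with the nested-step condition~\ref{nestedStep}, view them as concentric layers, and run an outside-in induction that ``peels'' the $C$-graph shell by shell: after processing $\omega^{(0)},\ldots,\omega^{(\ell-1)}$, every position lying between $\omega^{(\ell-1)}$ and $\omega^{(\ell)}$ is shown to satisfy $X_{ij}-(X_0)_{ij}\in \mathcal I_t$ for some $t=O(\ell)$.

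At the base I would close the rectangles hanging off $\omega^{(0)}$ exactly as in the rank-2 proof, except that the smallest relevant minor is now $(r+1)\times(r+1)$; this is precisely why the $C$-graph definition demands that each white cell contain at least $r+1$ support entries and that $\min_{t,t'}|\omega(t)-\omega'(t')|>r$. For the inductive step, I would take a missing position $(i,j)$ in the current shell and use condition~\ref{nestedStep} to produce row indices $i<i_1<\cdots<i_r$ and column indices $j<j_1<\cdots<j_r$ appearing simultaneously on $\omega^{(\ell-1)}$ and $\omega^{(\ell)}$, so that the chosen $(r+1)\times(r+1)$ submatrix contains a single unknown entry among those not yet reduced into the truncated ideal. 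The vanishing of its determinant then yields, generically, a linear equation in $X_{ij}$ and places $X_{ij}-(X_0)_{ij}$ in the ideal generated by the $(r+1)$-minors.

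When nesting fails, the connection condition~\ref{connectionCondition} guarantees that the offending step of $\omega^{(\ell)}$ is still flanked by an $L$-shaped piece of $\omega^{(\ell-1)}$ covering both the row and the column of $(i,j)$, so one can still assemble the required minor by using one extra turn of $\omega^{(\ell-1)}$. The non-occlusion condition~\ref{nonOcclusion} is invoked wherever two walks overlap: the auxiliary walks $\tilde\omega^{(j)}$ supply the missing row or column indices needed to form a full $(r+1)\times(r+1)$ submatrix. Submatrices with several unknown entries are handled exactly as in the rank-2 proof: write
\begin{align}
\det \widetilde X_{I,J}(x_1,\ldots,x_s) = \sum_{\alpha} c_\alpha(x_2,\ldots,x_s)\,(x_1-(x_0)_1)^{\alpha} + q(x_2,\ldots,x_s),
\end{align}
observe that $x_1-(x_0)_1$ already sits in the truncated ideal from a previous step, subtract the corresponding multiple, and iterate until only one unknown remains.

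The main obstacle will be to keep strict control on the degree of the elimination as one moves inward. A naive bound multiplies degrees at every shell and at every elimination of an unknown, which would explode with $r$ and $\kappa$. The key technical lemma to prove carefully is that because each chosen $(r+1)\times(r+1)$ minor contains at most $r$ unknowns and every prior reduction is linear in the freshly isolated unknown (thanks to the nested-step geometry), the accumulated degree is additive in $\kappa$ rather than multiplicative, yielding $X_{ij}-(X_0)_{ij}\in\mathcal I_{O(\kappa)}$ for every missing position. Combined with the argument of~\cite{cosse2021stable}, this membership in the degree-$O(\kappa)$ truncated ideal upgrades, via the trace-norm semidefinite program built from the level-$O(\kappa)$ moment matrix, into exact and stable recovery of $X_0$, completing the proof.
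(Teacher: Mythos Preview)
Your proposal captures the right high-level architecture---replace $3\times 3$ minors by $(r+1)\times(r+1)$ minors, process the region between consecutive walks, track ideal membership with additive degree growth, and invoke~\cite{cosse2021stable} at the end---and you correctly identify what each of the three conditions is for. But the specific induction you set up has a gap that the paper avoids by organising the argument differently.

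The issue is the source of the row and column indices for your minors, and hence your base case. You peel shells outside-in and, at step $\ell$, look for indices $i_1<\cdots<i_r$, $j_1<\cdots<j_r$ ``appearing simultaneously on $\omega^{(\ell-1)}$ and $\omega^{(\ell)}$.'' Two adjacent walks simply do not carry enough structure to manufacture an $(r+1)\times(r+1)$ submatrix with a single unknown once $r>2$: near the target step each walk contributes essentially one horizontal and one vertical segment, so from two walks you get at most two candidate rows and two candidate columns whose pairwise crossings lie on the support. The paper obtains the required $r$ rows and $r$ columns by drawing on \emph{all} $r$ walks at once---the indices $K^{(\ell-r+1)},\ldots,K^{(\ell-1)}$ and $M^{(\ell-r+1)},\ldots,M^{(\ell-1)}$ in its proof come from $r-1$ distinct earlier walks, not from a single neighbour. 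The separation condition $\min_{t,t'}|\omega(t)-\omega'(t')|>r$ guarantees the walks are far enough apart to be distinguished, but it does not make any one of them thick enough to supply several rows or columns by itself.

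Relatedly, the paper's order is inside-out rather than outside-in, and it is not a pure shell induction. It first completes the nested-step rectangles (the blue squares of Fig.~\ref{rank3graph01}) sitting between the two \emph{innermost} staircases, where the long arms of the outer walks---in particular the $L$-shape---are still visible and supply the extra rows and columns of the minor (this is exactly the $4\times4$ pattern drawn in Fig.~\ref{rank3graph01}). Only after these nested steps are removed does one reduce to two parallel staircases and fill the band between them (Fig.~\ref{rank302a}); one then recurses outward, using the freshly completed band in place of one of the walks. Your outside-in direction breaks at the very first shell: between the $L$-shape and the next walk there is no completed region yet and only two walks' worth of data, so for $r>2$ no full minor can be assembled. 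Reversing the direction and drawing the $r$ indices from the whole family $\omega^{(\ell-r+1)},\ldots,\omega^{(\ell)}$ (and, after the first pass, from the completed band) closes the gap and aligns your argument with the paper's.
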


\begin{proof}
The difficulty is to complete the interior.  Once the interior has been completed, since there are at least $r$ paths,  since the paths are distant by at least $3$ cells each and since the polyomino spans the whole set of row and columns indices, the exterior can always be completed as shown in Fig.~\ref{rank3graph01}

For each step of the trajectory, either the step spans a vertical wall and a horizontal wall that are intersecting (blue squares in Fig.~\ref{rank3graph01}) or the step spans a vertical wall and a horizontal wall that are not intersecting (red square in Fig.~\ref{rank3graph01}). We discuss each of the situations below.   

\begin{itemize}
\item For any blue square, relying on the nested step condition~\ref{nestedStep}, we can complete the interior of the square as follows. We let $(j,j',j'')$ the indices associated to the limit of the step.  For any $(m,n)\in \mathbb{Z}^2$ with $\omega^{(\ell)}_1(j)<m<\omega^{(\ell)}_1(j')$, $\omega_2(j'')<n<\omega_2(j')$ by assumption we always have $K^{(\ell-r+1)}, \ldots K^{(\ell-1)}$ as well as $M^{(\ell-r+1)}, \ldots M^{(\ell-1)}$ such that $(K_j, n)\cup (m, M_\ell)\cup (K_j, M_\ell)$ belong to the subgraph $\mathcal{G}$ of the lattice graph   For each of those step, following the proof of Proposition~\ref{propositionCycleSelfAvoiding01}, one can express the monomial $X_{ij} - (X_0)_{ij}$ from the truncated ideal $I_M^{L}$ generated by the minors and the completion constraints.  

\item After removing all the nested steps, we are left with two parallel staircases such as shown in Fig.~\ref{rank302a} which can be completed by relying on the pattern (either vertical or horizontal) shown in that same figure. 
\end{itemize}

After the subregion located between the first two staircases (i.e. the ones with the smallest steps) has been completed, one can proceed recursively with the larger staircases, by filling the rectangles located below each step (relying on the region completed at the previous step) until we are again left with parallel staircases.  If the subregion corresponds to indices (either row or column) that were shared by several walks, we make use of the additional subpaths $\tilde{\omega}$ (see Fig.~\ref{rank302b}). 

\end{proof}

\section{Connections with existing work}

\subsection{\label{pseudoRandomSchemes}(Pseudo-)random sampling and incoherence }

The combination of random sampling and random or sufficiently incoherent singular vectors makes it possible to rely on classical convex optimization tools such as the nuclear norm. or semidefinite programming as well as spectral (projection) techniques. In~\cite{candes2008exact} the authors guarantee the recovery of the unknown matrix through nuclear norm minimization, as soon as $m\geq C n^{6/5}r\log n$ for random singular vectors and $m\geq C\mu rn\log(n)$ where $\mu$ depends on the incoherence of the singular vectors.  The result was later improved in~\cite{candes2010power} and~\cite{recht2011simpler}.  In both of those results, the row and column subspaces are requried to be sufficiently incoherent. Namely if we let $U^k$ (resp. $V^k$) to denote the $k^{th}$ row of $U$ (resp. $V$) in column format, those results require 
\begin{align}
\|U^k\|^2 \leq \frac{\mu_0 r}{m}, \; \forall k, \quad \|V^k\|^2 \leq \frac{\mu_0 r}{n}, \; \forall k.\label{incoherence01}
\end{align} 
as well as 
\begin{align}
\begin{split}
\left|\langle \bs e_a,  UU^\intercal, \bs e_b\rangle - \frac{r}{m} \mathbf{1}_{a=b} \right|\leq \mu_1 \frac{\sqrt{r}}{m}, \quad \forall 1\leq a, b\leq m.\\
\left|\langle \bs e_a,  VV^\intercal, \bs e_b\rangle - \frac{r}{n} \mathbf{1}_{a=b} \right|\leq \mu_1 \frac{\sqrt{r}}{n}, \quad \forall 1\leq a, b\leq m.
\end{split}\label{incoherence02}
\end{align}
In~\cite{keshavan2009matrix, keshavan2010matrix} the authors suggest to start from the estimate $(X_0)_{\Omega}$ and then apply the following steps: (i) trim the estimate by setting to zero all the columns that contain more than $2|\Omega|/n$ entries and all the rows that contain more than $2|\Omega|/m$ entries (ii) project the resulting estimate on the manifold of rank $r$ matrices by only retaining the largest $r$ singular values and (iii) clean the residual error by minimizing the discrepancy
\begin{align}
\min_{S\in \mathbb{R}^{r\times r}} \frac{1}{2}\sum_{(i,j)\in \Omega} \left((X_0)_{ij} - (XSY^\intercal)_{ij}\right)^2
\end{align}
where $XSY^\intercal$ is the result of the projection on the rank-$r$ manifold. The main result of the paper requires the set $\Omega$ to be sampled uniformly at random and the convergence of the scheme holds with high probability under incoherence conditions equivalent to the ones in~\eqref{incoherence01} and~\eqref{incoherence02} and as soon as $|\Omega|\geq nr (\sigma_{\max}/\sigma_{\min})^2 \mu'$ where $\mu'$ depends on the (in)coherences of the singular vectors. The result is extended to the noisy setting in~\cite{keshavan2010matrix} where a stability estimate is derived. 

A natural extension of the results discussed above consists in studying the completion of matrices defined on deterministic graphs that closely mimics randomness.  A family of such graphs is the family of expander graphs which are sparse despite having strong connectivity and mixing properties.  Expander graphs guarantee random like properties that make them useful in compressed sensing.  In~\cite{bhojanapalli2014universal}, the authors replace the uniform sampling assumption by a sampling scheme generated from the edge distribution of sufficiently strong expanders known as Ramanujan graphs.  Ramanujan graphs are associated with the largest possible spectral gap among expanders (i.e.  $d$-regular bipartite graph with $\sigma_2(G)\leq \sqrt{d-1}$). This can in turn be used with the nuclear norm to provide stable recovery guarantees under an incoherence assumption on the singular vectors of the unknown matrix.  In~\cite{bhojanapalli2014universal} the authors certify stable recovery of the matrix $X_0$ as soon as $d\geq C \mu_0^2 r^2$ and $|\Omega|\geq C\mu_0^2 r^2\max \left\{m,n\right\}$.  Following this first connection between expander graphs and matrix completion, a number of corrections and improvements were made including~\cite{burnwal2020deterministic} where the authors reduce the minimum number of measurements per row/column.  The result still requires singular vectors to be sufficiently incoherent (i.e the scalings still involving the incoherence $\mu_0$ defined in~\eqref{incoherence01}).  

In~\cite{heiman2014deterministic}, the authors suggest to recover the matrix through the minimization of the $\gamma_2$-norm defined as 
\begin{align}
\gamma_2(X) = \min_{UV^\intercal  = X} \|U\|_{\ell_2\rightarrow \ell_{\infty}^m} \|V\|_{\ell_1^n \rightarrow \ell_2}
\end{align}
The norm can be computed in polynomial time using convex programming.  When the observation are generated from the edges of a $d$-regular graph with second eigenvalue bounded by $\lambda$, for a square $n\times n$ matrix, the following estimate can be derived
\begin{align}
\frac{1}{n^2}\sum_{i,j} ((X_0)_{ij} - \hat{X}_{ij})^2 \leq c\gamma_2^2(X_0)\frac{\lambda}{d}\label{recoveryEstimateBasedGamma2norm}
\end{align}
The result does not require the usual incoherence assumptions on the matrix but the dependence on the $\gamma_2$-norm that appears in~\eqref{recoveryEstimateBasedGamma2norm} is somewhat equivalent in the sense that this estimate depends on the magnitude of the entries (see the discussion in section~\ref{discussionAlgebraicMC}) without providing intuition on the sampling patterns or graphs for which exact completion is possible.  Moreover, this type of recovery estimate does not make it possible to certify exact recovery except in the asymptotic regime when $\lambda = o(d)$ (e.g. Ramanujan) and for $\gamma_2(X)$ that does not grow with $d$.  The authors extend their result by providing an error estimate in the case of a general probability distribution $P$ on the entries. In this case the derive a bound of the form
\begin{align}
\sum_{(i,j)} p_{ij} ((X_0)_{ij} - \hat{X}_{ij})^2 \leq c\gamma_2(X_0)/\sqrt{d}.
\end{align}
Finally in~\cite{chen2014coherent} the authors show that nuclear norm minimization can be used to complete arbitrary rank $r$ matrices from $O(rn \log^2(n))$ entries provided that the sampling is proportionnal to local row and column coherences.  None of the above results really shed light on the characterization of sampling schemes for which exact completion can be guaranteed.  By being unable to certify exact completion outside asymptotic regimes, they implicitely seem to suggest that sampling schemes based on regular graphs or generated uniformly at random might not be the keys to understand and characterize completable sampling patterns. 

\subsection{\label{discussionAlgebraicMC}Deterministic/algebraic constructions}

As explained above, despite clear theoretical interests (in particular when combined with convex optimization) the incoherence assumptions and/or $\gamma_2$-norm do not provide intuition on what ``drives" the completability of low rank matrices.  I.e. what makes the difference between the patterns that can be completed and those that cannot,  and among the patterns that can be completed in theory, the difference between those that can be completed efficiently and those for which the exact completion, while possible in theory, remains out of reach in practice.  
Around Theorem~\ref{theoremOnlystructure01}, a number of approaches have attempted to characterize successful sampling patterns per se,  that is to say with respect to the structure itself and without the need for additional assumptions on the unknown matrix. Those approaches can roughly be organized into two groups: the first group (which includes~\cite{kiraly2015algebraic}) relies on combinatorics and matroid theory, notions from algebraic geometry such as the characterization of the determinantal variety and the Jacobian of the map $\mathcal{A}\;:\; U,V\mapsto A = UV^\intercal$ and/or the Grassmanian. The second relies on rigidity theory and the \emph{distance geometry problem} (a.k.a. \emph{graph realization problem}). 

Both approaches usually make the distinction between \emph{finite completability} (or \emph{local completability} in rigidity theory) and \emph{unique completability} (or \emph{global completability} in rigidity theory).  The terms \emph{finite} or \emph{unique completability} are used to denote settings in which completing the matrix uniquely given the pattern is not possible but the number of completion remains finite.  Conversely the authors use the terms \emph{global} or \emph{unique completability} when the low rank completion is guaranteed to be unique.  The classical references along those two directions can be summarized as follows. 

For finite completion, the authors in~\cite{pimentel2016characterization} require $r+1$ entries to be observed in each column and they require the existence of a $m\times r(m-r)$ submatrix $\tilde{M}$ of the mask $M$ such that every matrix formed by a subset of columns of $\tilde{M}$ has a number of non zero rows that is larger than $n/r + r$.  Similarly, for unique completion,  they require the existence of a partition of $M$ into two disjoint submatrices $\tilde{M}_1$ and $\tilde{M}_2$ of respective sizes $m\times r(m-r)$ and $m\times (m-r)$ such that the first mask $\tilde{M}_1$ satisfies the finite completability condition and for every submatrix formed by a subset of the columns of $\tilde{M}_2$ has a number of non zero rows that is at least $n+r$.   

A classical series of papers is~\cite{kiraly2012combinatorial, kiraly2013error, kiraly2015algebraic}.  The closest to this work in~\cite{kiraly2015algebraic} which derives several general finite and unique completability results based on matroid theory and the analysis of the Jacobian map $\mathcal{A}\;:\; U,V\mapsto A = UV^\intercal$.  We recall some of the most useful ideas in section~\ref{mathematicalPreliminaries}.  Among the main contributions of the paper, one must mention Theorems 10 and 17 which formalize the idea that completability solely depend on the structure of the measurements and not on the unknown matrix itself.  Additional key contributions of the paper include Algorithms 1 and 2 which compute the notions of rank $r$ finitely completable closure (Algorithm 1) and generic stress rank (Algorithm 2) that are key in certifying finite or unique completability.  Despite the value of the tools derived in the paper,  the characterization of completable graph remains relatively limited and the completability conditions relatively general.  The most relevant to this paper can be derived from Proposition 36 and Corollary 39 and says that if a pattern $\Omega$ is finitely completable in rank $r$, it is necessarily $r$-edge connected and has degree at least $r$.  This is in accordance with the patterns that are derived in this paper.

In the context of rigidity theory based approaches,  the authors usually start by connecting the graph rigidity/distance geometry problems to the completion of Gram (positive semidefinite) matrices.  Such an approach is followed in~\cite{singer2010uniqueness} as well as~\cite{hendrickson1992conditions} and~\cite{jackson2014combinatorial}.  The results from this second line of work usually rely on the notion of infinitesimal motion. I.e when considering samples from the distance matrix $d_{ij} = \|p_i - p_j\|$,  we can consider the smooth motions $p(t) = [p_1(t), \ldots, p_n(t)]$ that preserve the pairwise distances, i.e. such that $\frac{d}{dt}\|p_i - p_j\|^2 = 0$ for all $(i,j)\in E$.  Such a definition thus implies the well known relations 
\begin{align}
\langle p_i - p_j,  \dot{p}_i - \dot{p}_j \rangle  = 0, \quad \text{for all $(i,j)\in E$.}\label{systemInfinitesimalMotions}
\end{align}
This system of equations can then be written as $R_G(p)\dot{p} = 0$, where $R_G(p)$ is known as the rigidity matrix and is of size $m\times dn$.  One can then study the configuration for which no infinitesimal deformation is possible. It is important to note that for every skew-symmetric $d\times d$ matrix $A$ (with $A^\intercal  = -A$),  and for every $b\in \mathbb{R}^d$, we have $\dot{p}_i = Ap_i + b$ is an infiniteseimal motion that satisfies the system~\eqref{systemInfinitesimalMotions}. In the setting of graph rigidity, $A$ represents an orthogonal transformation and $b$ is the translation vector.  Combining the $d$ degrees of freedom of the translation to the $d(d-1)/2$ degrees of freedom of the orthogonal transformation, a configuration is called infinitesimally rigid if $\text{\upshape dim null}(R_G(p)) = d(d+1)/2$.  The authors of~\cite{singer2010uniqueness} use the similarity between the graph rigidity problem and the completion problem to derive a number of general local and global completion results, the stronger ones being derived for the rank-1 completion of Gram matrices.  The most related to this paper are probably Propositions 5.1. and 5.2. which imply that every locally completable matrix contains a graph that spans the whole set of vertices and that is $(d, \frac{d-1}{2})$ sparse which means that every subset of $n'\leq n$ spans at most $dn' - \frac{d-1}{2}$ edges.  The paper also provides algorihtms to test local and global completion.  

A number of other papers address this connection between the graph rigidity problem and matrix completion, starting with the completion of positive semidefinite Gram matrices. One of the most relevant to us are~\cite{hendrickson1992conditions} and~\cite{jackson2014combinatorial} in which a number of guarantees are derived.  The most related to this paper include~\cite{hendrickson1992conditions}, Theorem 3.1., where the authors show that if a configuration is globally rigid for a generic configuration, then the graph must be (d+1)-connected (which means that it remains connected after removing $d+1$ vertices) (also see Theorem 1.2.1 in~\cite{cucuringu2012graph} and the discussions that follow for details).  As indicated in~\cite{singer2010uniqueness}, it is important to keep in mind that the global rigidity problem is different from the uniqueness in matrix completion (the problems are in fact known to be equivalent only at the local level and  when the diagonal entries of the Gram matrix are all known, or equivalently, when all the vertices in the graph contain a loop).  The results can nonetheless be used to develop some intuition as we try to do below. Another interesting line of result can be found in~\cite{jackson2014combinatorial}.  Among the most meaningful results of this paper, one should mention Theorems 13 and 14 (see also Theorem 12 for the weaker notion of local completability). The latter of those statements in particular show that the cluster graph induced by a general multigraph $G = (V, E)$ with $|V|$ and $|E|$ larger than $2$ is globally completable in $\mathbb{R}^d$ if and only if the graph $H$ is highly ${d\choose 2}$ tree-connected.  A graph is called ${d\choose 2}$-tree connected if for all partitions $\mathcal{P} = \left\{X_1, X_2, \ldots , X_t\right\}$ of $V$,  if $e_{H}(\mathcal{P})$ denotes the number of edges of $H$ connecting distinct members of $\mathcal{P}$,  we have $e_H(\mathcal{P})\geq {d\choose 2}(t-1)$. More interestingly, Results by Nash-Williams and Tutte~\cite{nash1961edge, tutte1961problem} show equivalence between the notion $m$-tree connectivity and the existence of $m$ disjoint spanning trees in the graph $H$.  Although the catch here is that the ${d\choose 2}$-tree connectivity condition is enforced on $H$ while the completability is guaranteed on the cluster graph which is obtained from $H$ as follows: One first replaces each vertex $v\in V$ by the (complete) graph $K^\circ_{d(v)}$ on $d(v)$ vertices to which a loop is added on every vertex.  One then repalce each edge $(s,t)\in E$ by an edge between the clusters $C_s$ and $C_t$ in such a way that no two edges share a common vertex (i.e. the edges are pairwise disjoint).  The catch here is that the (high) ${d\choose 2}$-tree connectivity of $H$ is needed to guarantee the global completability of $G^\circ_H$ and not $H$ itself. Nonetheless, the connection between unique completion and the existence of a sufficient number of non overlapping trees is an interesting hint and seems to be corroborated by the results presented in this paper.

\bibliographystyle{ieeetr}
\bibliography{sample}

\end{document}